\pgfplotsset{compat=newest}
\newtheorem{thm}{Theorem}
\newtheorem{lem}{Lemma}
\newcommand{\E}{\operatorname{E}}
\newcommand{\wH}{\text{w}_{\text{H}}}
\newcommand{\Ber}{\operatorname{Ber}}
\newcommand{\Bin}{\operatorname{Bin}}
\newcommand{\rank}{\operatorname{rank}}
\newcommand{\N}{\mathbb{N}}
\newcommand{\R}{\mathbb{R}}
\newcommand{\T}{\mathsf{T}}
\newcommand{\calA}{\mathcal{A}}
\newcommand{\calE}{\mathcal{E}}
\newcommand{\calU}{\mathcal{U}}
\newcommand{\calX}{\mathcal{X}}
\newcommand{\vect}[1]{\bm{#1}}                          
\newcommand{\ip}[2]{\left\langle #1, #2 \right\rangle}  
\begin{document}
\title{Online Memorization of Random Firing Sequences \\ by a Recurrent Neural Network}


\author{%
  \IEEEauthorblockN{Patrick Murer and Hans-Andrea Loeliger}
  \IEEEauthorblockA{Dept. of Information Technology and Electrical Engineering (D-ITET) \\
                    ETH Z\"urich \\
                    Email: \{murer, loeliger\}@isi.ee.ethz.ch}
}


\maketitle

\begin{abstract}
This paper studies the capability of a recurrent neural network model
to memorize random dynamical firing patterns by a simple local learning rule.
Two modes of learning/memorization are considered:
The first mode is strictly online, with a single pass through the data, 
while the second mode uses multiple passes through the data. 
In both modes, the learning is strictly local (quasi-Hebbian): At any given time step, 
only the weights between the neurons firing (or supposed to be firing) at the previous time step 
and those firing (or supposed to be firing) at the present time step are modified.

The main result of the paper is an upper bound on the probability 
that the single-pass memorization is not perfect. 
It follows that the memorization capacity in this mode asymptotically 
scales like that of the classical Hopfield model 
(which, in contrast, memorizes static patterns). 
However, multiple-rounds memorization is shown to achieve 
a higher capacity (with a nonvanishing number of bits per connection/synapse).
These mathematical findings 
may be helpful for understanding the functions of
short-term memory and long-term memory in neuroscience.
\end{abstract}


\section{Introduction}
\label{sec:Intro}

In this paper, we study the capability 
of a simple recurrent neural network to memorize 
and to reproduce a random dynamical firing pattern. 

The background of this paper are neural networks with spiking neurons
\cite{VGT:spms2005} -- \cite{MarziHespanhaMadhow2018}.
Such networks may be studied 
either as models of biological neural networks, 
or as candidates for neuromorphic hardware, 
or as a mode of mathematical signal processing as in \cite{LoeligerNeff2015}. 
In any case, memorizing long sequences of firing patterns 
must be an elementary capability of such networks. 
The rare phenomenon of a photographic memory may here remind us
of the feats of memorization routinely performed in everyday activities.

The classic reference for memorization is the Hopfield network
\cite{Hopfield1982}, \cite[Chapter 42]{MacKayBook}.
Recurrent networks with higher capacities have been proposed in
\cite{KarbasiEtAl2013} -- \cite{ChaudhuriFiete2019}.
However, all these networks memorize static vectors (as static attractors 
of a dynamical network).
By contrast, in this paper, we study the memorization of 
dynamical firing sequences,
which seems to have been somewhat neglected in the literature. 

The present paper is not immediately related to the vast literature 
on (nonspiking) recurrent neural networks such as LSTM networks
\cite{LSTM1997} and others \cite{Gardner1988} -- \cite{MiyoshiEtAl2003}.

We will consider two different modes of learning. 
The first mode is strictly online, with a single pass through the data;
the second mode uses multiple passes through the data. 
In both modes, the learning is strictly local, or quasi Hebbian: 
At any given time $n$, only the weights between the neurons firing 
(or supposed to be firing) at time $n-1$ 
and those firing (or supposed to be firing) at time $n$ are modified.
The first mode may thus be viewed as a model for instantaneous learning 
in short-term memory.

The main result of this paper is an upper bound on the probability 
that the single-pass memorization is not perfect. 
From this bound, it follows that the asymptotic memorization capacity 
in the strict online mode is at least $O\big(L/\ln(L)\big)$ bits per neuron,
which vanishes in terms of bits per connection (i.e., per synapse).
By contrast, multiple-rounds memorization is easily seen to 
achieve a significantly higher capacity, 
with a nonvanishing number of bits per connection/synapse. 
The (important) ability of single-pass online memorization
thus appears to be bought at the expense of a smaller capacity,
which may be of interest for understanding the functions 
of short-term memory and long-term memory in neuroscience
\cite{Miller1956} -- \cite{Cowan2008}.

The paper is structured as follows. The network model is defined in Section~\ref{sec:NetworkModel}.
Section~\ref{sec:LearningRules} introduces the considered learning rules. 
The main result---an upper bound on the probability of imperfect
single-pass memorization---is stated in Section~\ref{sec:Theorem}.
The bulk of the paper is Section~\ref{sec:ProofThm}, which proves the bound of Section~\ref{sec:Theorem}.
Section~\ref{sec:MultiPassMem} investigates multi-pass memorization via a least-squares approach.
The asymptotic memorization capacity of both learning modes is addressed in Section~\ref{sec:Capacity},
and Section~\ref{sec:Conclusion} concludes the paper.


\section{The Network Model}
\label{sec:NetworkModel}

We consider a discrete-time network model with $L$ neurons
$\xi_1,\ldots,\xi_L$ as follows.
Each neuron is a map $\xi_\ell \colon \R^L \rightarrow \{0,1\}$
defined as
\begin{align}
\vect{y} \mapsto \xi_\ell(\vect{y}) := \begin{cases}
                                            1, &\text{if } \ip{\vect{y}}{\vect{w}_\ell} + \eta_\ell \geq \theta_\ell \\
                                            0, &\text{otherwise,}
                                       \end{cases}
\end{align}
which is characterized by a weight vector $\vect{w}_\ell \in \R^L$
and a threshold $\theta_\ell\in\R$ and where 
$\ip{\vect{y}}{\vect{w}_\ell} := \vect{w}_\ell^\T\vect{y}$ is the standard inner product.
The quantity $\eta_\ell$ is an arbitrary bounded disturbance (or error) with
\begin{align}
    -\eta \leq \eta_\ell \leq \eta, \label{eq:DefBoundedNoise1}
\end{align}
which subsumes imprecise computations and freak firings.
In our main result, $\eta$ will be allowed to grow
linearly with $L$, cf.\ (\ref{eq:DefThreshold}) and (\ref{eq:LConstraint}) below.

These neurons are connected to form an autonomous
recurrent network producing the signal (firing sequence) 
$\vect{y}[1], \vect{y}[2], \ldots \in \{0,1\}^L$ with 
\begin{align} \label{eqn:NetworkRecurrence}
    \vect{y}[k+1] := \big( \xi_1(\vect{y}[k]),\ldots, \xi_L(\vect{y}[k]) \big)^\T
\end{align}
beginning from some initial value $\vect{y}[0] \in\R^L$.

In this paper, we want the network to reproduce a signal 
(i.e., a firing sequence) of length $N\geq 2$ that is given in the form of a matrix 
$\bm{A} = (\vect{a}_1,\ldots,\vect{a}_N) \in\{0,1\}^{L \times N}$ 
with columns $\vect{a}_1,\ldots,\vect{a}_N \in\{0,1\}^L$,
i.e., we want (\ref{eqn:NetworkRecurrence}),
when initialized with 
\begin{align} 
    \vect{y}[0]=\vect{a}_0 := \vect{a}_N
\end{align}
to yield
\begin{align} 
    \vect{y}[k] = \vect{a}_{(k \bmod N)} \label{eqn:RepeatSignalA}
\end{align}
for $k=1,2,\ldots$, repeating the columns of $\vect{A}$  forever.

Such a network can be used as an associative memory as follows:
When initialized with an arbitrary column of $\bm{A}$
\begin{align} 
    \vect{y}[0] =  \vect{a}_n, \label{eqn:InitToColumn}
\end{align}
the network will produce the sequence
\begin{align}
    \vect{y}[k] = \vect{a}_{((k+n) \bmod N)},\quad k=1, 2, \ldots
\end{align}


\section{Learning Rules}
\label{sec:LearningRules}

Given the matrix $\bm{A} = (a_{\ell,n})$ (where $a_{\ell,n}$ is the entry in row $\ell$ and column $n$), 
we consider learning rules of the following form.
Starting from some initial value $\vect{w}_\ell^{(0)}\in\R^L$ the weights are updated recursively by
\begin{align} 
    \vect{w}_\ell^{(n)} = \vect{w}_\ell^{(n-1)} + \Delta\vect{w}_{\ell,n}, \quad n=1,\ldots,K, \label{eq:OnlineMemRule}
\end{align}
where the weight increment $\Delta\vect{w}_{\ell,n}$
of neuron $\xi_\ell$ at time $n$ depends only on $a_{\ell,n}$ 
(the desired behavior of this neuron at this time)
and on the preceding firing vector $\vect{a}_{n-1}$,
and perhaps also on the previous weights $\vect{w}_\ell^{(n-1)}$
of this neuron.

This mode of learning may be called quasi-Hebbian
since the stated restrictions on $\Delta\vect{w}_{\ell,n}$
essentially agree with those of Hebbian learning \cite{Hebb1949},
except that the term ``Hebbian'' is normally reserved for unsupervised learning.
The point of these restrictions is their suitability for hardware implementation,
both biological and neuromorphic.

We will consider two versions of (\ref{eq:OnlineMemRule}).
In the first version (cf.\ Section~\ref{sec:Theorem}),
we pass through the data exactly once, i.e., $K=N$, and
\begin{align}
    \Delta\vect{w}_{\ell,n} := a_{\ell,n}\big(\vect{a}_{n-1}-p\vect{1}_L\big), \label{eq:WeightIncSingle}
\end{align}
where
\begin{align}
    \vect{1}_L := \big(1,1,\ldots,1\big)^\T \in \R^L,
\end{align}
and $0<p<1$ is defined in Section~\ref{sec:Theorem}.
In the second version (in Section~\ref{sec:MultiPassMem}),
we allow multiple passes through the data, i.e., $K\gg N$, and
\begin{align}
    \Delta\vect{w}_{\ell,n} := \beta^{(n)}\Big(a_{\ell,n} - \big\langle\vect{a}_{n-1},\vect{w}_\ell^{(n-1)}\big\rangle\Big)\vect{a}_{n-1}, \label{eq:WeightIncMulti}
\end{align}
for some step size $\beta^{(n)}>0$.


\section{Single-pass Memorization -- Main Result}
\label{sec:Theorem}

For a network as in Section~\ref{sec:NetworkModel},
we now analyze the probability of perfect memorization
for a random matrix $\bm{A}\in\{0,1\}^{L\times N}$ with i.i.d.
entries $a_{\ell,n}$ parameterized by
\begin{align}
    p := \Pr[a_{\ell,n}=1],
\end{align}
which we denote by $\bm{A} \overset{\text{i.i.d.}}{\sim} \Ber(p)^{L\times N}$.

The weight vectors are defined as
\begin{align}
    \vect{w}_{\ell} := \vect{w}_{\ell}^{(N)} \label{eq:OnlineMemRule_Tot}
\end{align}
where $\vect{w}_{\ell}^{(N)}$ is defined recursively as
\begin{align}
    \vect{w}_{\ell}^{(n)} := \begin{cases}
                                    \vect{w}_{\ell}^{(n-1)},                                &\text{if } a_{\ell,n}=0 \\
                                    \vect{w}_{\ell}^{(n-1)} + \vect{a}_{n-1}-p\vect{1}_L,   &\text{if } a_{\ell,n}=1,
                              \end{cases} \label{eq:SinglePassDeltaWeight}
\end{align}
and $\vect{w}_{\ell}^{(0)}=\vect{0}$, for $n=1,\ldots,N$,
as in (\ref{eq:WeightIncSingle}), resulting in
\begin{align} 
    \vect{w}_\ell = \sum_{j\in J_\ell} \big(\vect{a}_{j-1} - p\vect{1}_L\big)
                  = \sum_{j\in J_\ell} \vect{a}_{j-1} - |J_\ell|p\vect{1}_L, \label{eq:DefWeights}
\end{align}
where $J_\ell$ is the set 
\begin{align}
    J_\ell := \{ n\in \{ 1,\ldots, N \}: a_{\ell,n}=1 \}
\end{align}
of desired firing positions of neuron $\xi_\ell$ and $|J_\ell|$ denotes its cardinality.
It is easily verified that
\begin{align}
    \E[\vect{w}_\ell] = \vect{0}.
\end{align}

Let $\calE_{\bm{A}}$ be the event that the memorization of $\bm{A}$ is not perfect.
Our main result is the following theorem.

\begin{thm}[\bfseries{Upper Bound on $\Pr[\calE_{\bm{A}}]$}] \label{thm:MainThm}
For all integers $L\geq 1$ and $N\geq 2$, $0<p<1$, $\bm{A}\overset{\text{i.i.d.}}{\sim} \Ber(p)^{L\times N}$,
the recurrent network with weight vectors (\ref{eq:DefWeights}), thresholds
\begin{align}
    \theta_\ell := \theta := \frac{1}{4}Lp(1-p), \quad \ell=1,\ldots,L, \label{eq:DefThreshold}
\end{align}
disturbance bound
\begin{align}
    \eta := \tilde\eta \cdot \theta, \quad 0<\tilde\eta<1, \label{eq:LConstraint}
\end{align}
and initialized with any column of $\bm{A}$
will reproduce a periodic extension of $\bm{A}$ with
\begin{align}
    &\Pr[\calE_{\bm{A}}] \nonumber \\
    &\quad< 2LN e^{-\frac{1}{8}(1-\tilde\eta)^2p^2(1-p)^2\frac{L}{N}} + LN e^{-D_{\text{KL}}\left(\left.\!\frac{1+\tilde\eta}{2}p \right\| p\right) L}, \label{eq:ThmBound}
\end{align}
where $D_{\text{KL}}(p_1\!\!\!\parallel\!\!p_2)$ denotes the
Kullback--Leibler divergence (as defined in (\ref{eq:DefKLDivergence}) below)
between two Bernoulli distributions with success probabilities $0<p_1,p_2<1$.
\hfill $\Box$
\end{thm}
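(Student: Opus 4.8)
The plan is to show that, with high probability over the draw of $\bm{A}$, every neuron $\xi_\ell$ fires correctly at every time step of the periodic sequence; a union bound over the $L$ neurons and $N$ time steps then controls $\Pr[\calE_{\bm{A}}]$. Fix $\ell$ and $n$; the neuron must satisfy $\ip{\vect{a}_{n-1}}{\vect{w}_\ell} + \eta_\ell \geq \theta$ when $a_{\ell,n}=1$ and $\ip{\vect{a}_{n-1}}{\vect{w}_\ell} + \eta_\ell < \theta$ when $a_{\ell,n}=0$. Since $-\eta \leq \eta_\ell \leq \eta$ with $\eta = \tilde\eta\theta$, it suffices (worst case over the disturbance) to prove $\ip{\vect{a}_{n-1}}{\vect{w}_\ell} \geq (1+\tilde\eta)\theta$ in the firing case and $\ip{\vect{a}_{n-1}}{\vect{w}_\ell} \leq (1-\tilde\eta)\theta$ in the non-firing case. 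Using the closed form $\vect{w}_\ell = \sum_{j\in J_\ell}(\vect{a}_{j-1}-p\vect{1}_L)$ from (\ref{eq:DefWeights}), I would expand
\begin{align}
\ip{\vect{a}_{n-1}}{\vect{w}_\ell} = \sum_{j\in J_\ell} \Big( \ip{\vect{a}_{n-1}}{\vect{a}_{j-1}} - p\,\wH(\vect{a}_{n-1}) \Big),
\end{align}
and split off the ``diagonal'' term $j$ corresponding to the current step (when $a_{\ell,n}=1$, i.e.\ $n\in J_\ell$) from the ``cross'' terms $j\neq n$.

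The diagonal term, when present, equals $\ip{\vect{a}_{n-1}}{\vect{a}_{n-1}} - p\,\wH(\vect{a}_{n-1}) = (1-p)\wH(\vect{a}_{n-1})$, which concentrates around $(1-p)Lp = Lp(1-p) = 4\theta$ by a Chernoff/Hoeffding bound on the Binomial weight $\wH(\vect{a}_{n-1}) \sim \Bin(L,p)$. This is the ``signal'': it is a constant factor ($4$) above the threshold, leaving slack to absorb both the disturbance $\tilde\eta\theta$ and the noise from the cross terms. The cross terms $\sum_{j\in J_\ell, j\neq n}(\ip{\vect{a}_{n-1}}{\vect{a}_{j-1}} - p\wH(\vect{a}_{n-1}))$ are the ``interference.'' Conditioned on $\vect{a}_{n-1}$ and on the index set $J_\ell$ (equivalently on column $\ell$ of $\bm{A}$, which determines $J_\ell$), each summand is a sum over the $\wH(\vect{a}_{n-1})$ active coordinates of $\vect{a}_{n-1}$ of terms $a_{i,j-1}-p$, which are centered and independent across the relevant coordinates; so each cross term is a zero-mean sum of bounded independent variables, and the whole interference is zero-mean with variance $O(|J_\ell|\,\wH(\vect{a}_{n-1}))$. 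A Hoeffding bound then gives that the interference exceeds $(1-\tilde\eta)\theta$ in absolute value with probability at most roughly $\exp\!\big(-c(1-\tilde\eta)^2\theta^2/(|J_\ell|\wH(\vect{a}_{n-1}))\big)$; substituting $\theta = \tfrac14 Lp(1-p)$, $\wH(\vect{a}_{n-1}) \approx Lp$, and the worst case $|J_\ell| \approx Np$ yields an exponent of order $(1-\tilde\eta)^2 p^2(1-p)^2 L/N$, matching the first term of (\ref{eq:ThmBound}). The independence structure I should be careful about: one must condition on $\vect{a}_{n-1}$ and on the $\ell$-th row pattern before invoking independence of the cross-term entries, and handle the edge case where a wrap-around index makes $j-1 = n-1$; I expect the cleanest route is to condition on $(\vect{a}_{n-1}, J_\ell)$ throughout and note that the remaining randomness (the entries $a_{i,m}$ for $i\neq\ell$, $m\neq n-1$) is fresh.

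It remains to handle: (i) the non-firing case $a_{\ell,n}=0$, where there is no diagonal term, so the bound $\ip{\vect{a}_{n-1}}{\vect{w}_\ell}\leq(1-\tilde\eta)\theta$ follows directly from the same interference concentration — this is why only one interference-type exponent appears; (ii) the requirement that $\wH(\vect{a}_{n-1})$ not be too large, since an atypically dense $\vect{a}_{n-1}$ would inflate the interference variance and shrink the signal slack — bounding $\wH(\vect{a}_{n-1}) \leq \tfrac{1+\tilde\eta}{2}L p$ (say) by a Chernoff bound on $\Bin(L,p)$ produces exactly the $\exp(-D_{\text{KL}}(\tfrac{1+\tilde\eta}{2}p \,\|\, p)L)$ factor, the second term of (\ref{eq:ThmBound}), with $D_{\text{KL}}$ the binary relative entropy
\begin{align}
D_{\text{KL}}(p_1\|p_2) := p_1\ln\frac{p_1}{p_2} + (1-p_1)\ln\frac{1-p_1}{1-p_2}; \label{eq:DefKLDivergence}
\end{align}
and (iii) a final union bound over $\ell\in\{1,\dots,L\}$ and $n\in\{1,\dots,N\}$, absorbing the constants $2$ and $1$ in front of the two exponentials. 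The main obstacle, and the place where care is needed, is step (i)/(ii) of the first paragraph made rigorous: disentangling the dependence between $\vect{w}_\ell$ and $\vect{a}_{n-1}$ (both are functions of $\bm{A}$, and $\vect{a}_{n-1}$ appears inside $\vect{w}_\ell$ through the $j=n$ term and possibly through other $j$) so that a clean Hoeffding bound applies to the interference. Everything else is bookkeeping with standard concentration inequalities.
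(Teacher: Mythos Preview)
Your skeleton --- split $\ip{\vect{a}_{n-1}}{\vect{w}_\ell}$ into a signal term $(1-p)\wH(\vect{a}_{n-1})$ (present iff $a_{\ell,n}=1$) plus an interference term $S_{\ell,n}$, concentrate each, and union-bound over $(\ell,n)$ --- is exactly the paper's route. But you have the role of the KL term backwards. Since $\tfrac{1+\tilde\eta}{2}p<p$, the quantity $D_{\text{KL}}\!\big(\tfrac{1+\tilde\eta}{2}p\,\big\|\,p\big)$ is the \emph{lower}-tail rate of $\wH(\vect{a}_{n-1})\sim\Bin(L,p)$: the paper's second term bounds $\Pr\!\big[\wH(\vect{a}_{n-1})<\tfrac{1+\tilde\eta}{2}Lp\big]$, i.e.\ the event that the signal $(1-p)\wH(\vect{a}_{n-1})$ is too \emph{weak} to clear $2\theta(1+\tilde\eta)$. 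A dense $\vect{a}_{n-1}$ does not ``shrink the signal slack'' --- it enlarges the signal; only the interference variance grows with $\wH$. Your stated good event ``$\wH(\vect{a}_{n-1})\leq\tfrac{1+\tilde\eta}{2}Lp$'' sits strictly below the mean $Lp$ and therefore fails with probability tending to $1$, not $e^{-LD_{\text{KL}}}$.

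The second departure is in how the interference is bounded. You propose to condition on $(\vect{a}_{n-1},J_\ell)$ and then apply Hoeffding to the roughly $\wH(\vect{a}_{n-1})\cdot|J_\ell|$ ``fresh'' entries. The paper instead proves an \emph{unconditional} sub-Gaussian bound $\E\!\left[e^{tS_{\ell,n}}\right]<e^{t^2LN/8}$ (its Lemma~\ref{lem:Sigma}) via an iterated-conditioning argument that absorbs the residual dependence through row $\ell$ (the entries $a_{\ell,j-1}$ that appear in $S_{\ell,n}$ are \emph{not} fresh once $J_\ell$ is fixed) without ever needing $\wH$ or $|J_\ell|$ to be small. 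This is precisely why the first exponent in (\ref{eq:ThmBound}) carries $LN$ in the variance proxy, giving $\tfrac18(1-\tilde\eta)^2p^2(1-p)^2\,L/N$; plugging in your $\wH\cdot|J_\ell|\approx Lp\cdot Np$ would instead yield $\tfrac18(1-\tilde\eta)^2(1-p)^2\,L/N$, which does \emph{not} match the stated bound. Your conditional route may be salvageable, but making it rigorous would require separate large-deviation control of both tails of $\wH(\vect{a}_{n-1})$ and of the upper tail of $|J_\ell|$, producing more error terms than the two appearing in (\ref{eq:ThmBound}).
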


In consequence, a sufficient condition for the bound in (\ref{eq:ThmBound})
to vanish for $L\to\infty$ is
\begin{align}
    N \leq \frac{1}{8}(1-\tilde\eta)^2p^2(1-p)^2\frac{L}{2\ln(L)}. \label{eq:Nconstraint}
\end{align}
Some numerical examples are given in Figure~\ref{fig:LNCurvesWithConstErrProb},
which plots $L$ vs. $N$ for the right-hand side of (\ref{eq:ThmBound})
to achieve some desired level.

Clearly, for all $\varepsilon > 0$, there exists $L_\varepsilon \in \N$
such that $L^2/\ln(L) \geq L^{2-\varepsilon}$ for all $L\geq L_\varepsilon$.
It follows that
\begin{align}
    LN\geq L^{2-\varepsilon}
\end{align}
for $N=L/\ln(L)$ and $L\to\infty$,
i.e., asymptotically the network is able to memorize almost
square matrices with instantaneous learning as in (\ref{eq:OnlineMemRule_Tot}) -- (\ref{eq:DefWeights}).


\section{Proof of Theorem~\ref{thm:MainThm}}
\label{sec:ProofThm}

We now prove Theorem~\ref{thm:MainThm}, by using the union bound
and by upper bounding the error probability for a single entry $a_{\ell,n}$
which amounts to bound the tails of $\ip{\vect{a}_{n-1}}{\vect{w}_\ell}$.

The memorization is perfect if and only if $\xi_\ell(\bm{a}_{n-1}) = a_{\ell,n}$
for all $\ell\in\{1,\ldots,L\}$ and for all $n\in\{1,\ldots,N\}$.
By the union bound, we have
\begin{align}
    \Pr[\calE_{\bm{A}}] \leq \sum_{\ell=1}^L \sum_{n=1}^N \Pr\!\left[\xi_\ell(\bm{a}_{n-1})\neq a_{\ell,n} \right]. \label{eq:SumOfErrProb}
\end{align}
Moreover, using the same threshold $\theta$ for each neuron
and by the law of total probability, we have
\begin{align}
    \Pr\!&\left[\xi_\ell(\bm{a}_{n-1})\neq a_{\ell,n} \right] \nonumber \\
            &\qquad = (1-p)\Pr\!\left[\left.\ip{\bm{a}_{n-1}}{\bm{w}_\ell} + \eta_\ell \geq \theta \,\right| a_{\ell,n}=0\right] \nonumber \\
            &\qquad\quad + p\Pr\!\left[\left.\ip{\bm{a}_{n-1}}{\bm{w}_\ell} + \eta_\ell < \theta \,\right| a_{\ell,n}=1\right]. \label{eq:ProbSumOfTwoErrEvents}            
\end{align}

Now, let $\ell\in\{1,\ldots,L\}$ and let $n\in\{1,\ldots,N\}$
be fixed but arbitrary. Then
\begin{align}
    \ip{\vect{a}_{n-1}}{\vect{w}_\ell} &= \ip{\vect{a}_{n-1}}{\sum_{j\in J_\ell} \big(\vect{a}_{j-1} - p\vect{1}_L\big)} \\
        &= \sum_{j\in J_\ell} \big\langle \vect{a}_{n-1}, \underbrace{\vect{a}_{j-1} - \E[\vect{a}_{j-1}]}_{\qquad\,\,=:\,\tilde{\vect{a}}_{j-1}} \big\rangle \\
        &= \sum_{j=1}^N a_{\ell,j} \ip{\vect{a}_{n-1}}{\tilde{\vect{a}}_{j-1}} \\
        &= a_{\ell,n} \ip{\vect{a}_{n-1}}{\tilde{\vect{a}}_{n-1}} + S_{\ell,n}, \label{eq:CorrelationA}
\end{align}
where
\begin{align}
    S_{\ell,n} := \sum_{j\in\{1,\ldots,N\}\setminus\{n\}} a_{\ell,j} \ip{\vect{a}_{n-1}}{\tilde{\vect{a}}_{j-1}}\!. \label{eq:Def_selln}
\end{align}

\begin{lem} \label{lem:Sigma}
The random variable $S_{\ell,n}$ as defined in (\ref{eq:Def_selln}) has expectation zero, i.e.,
\begin{align}
    \E[S_{\ell,n}] &= 0, \label{eq:lemmaS_A}
\end{align}
and its moment generating function is upper bounded by
\begin{align}
    \E\!\left[e^{tS_{\ell,n}}\right] < e^{\frac{t^2}{8}LN} \label{eq:lemmaS_B}
\end{align}
for all $t\in\R$. \hfill $\Box$
\end{lem}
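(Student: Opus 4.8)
The plan is to establish the two claims separately, starting with the easy one. For \eqref{eq:lemmaS_A}, I would simply take the expectation of \eqref{eq:Def_selln} term by term. For each $j \neq n$, the factor $a_{\ell,j}$ is independent of the inner product $\ip{\vect{a}_{n-1}}{\tilde{\vect{a}}_{j-1}}$ (since the latter involves only the columns indexed $n-1$ and $j-1$ of $\bm{A}$, neither of which is column $j$ when $j \neq n$ — one should double-check the boundary case $j-1 = n$, i.e.\ $j = n+1$, where $\tilde{\vect{a}}_{j-1} = \tilde{\vect{a}}_n$ and $a_{\ell,n+1}$ is still independent of columns $n-1$ and $n$). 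So $\E[a_{\ell,j}\ip{\vect{a}_{n-1}}{\tilde{\vect{a}}_{j-1}}] = p \cdot \E[\ip{\vect{a}_{n-1}}{\tilde{\vect{a}}_{j-1}}]$, and the remaining expectation vanishes: if $j-1 \neq n-1$ the two column vectors are independent and $\E[\tilde{\vect{a}}_{j-1}] = \vect{0}$; if $j - 1 = n-1$ (i.e.\ $j = n$, which is excluded) it would not vanish, but that term is not in the sum. Hence $\E[S_{\ell,n}] = 0$.

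For the moment generating function bound \eqref{eq:lemmaS_B}, the approach is a conditioning argument combined with Hoeffding's lemma. Condition on the column $\vect{a}_{n-1}$; write $m := \wH(\vect{a}_{n-1})$ for its Hamming weight. Given $\vect{a}_{n-1}$, the inner product $\ip{\vect{a}_{n-1}}{\tilde{\vect{a}}_{j-1}} = \sum_{i : a_{i,n-1}=1} \tilde a_{i,j-1}$ is, for each relevant $j$, a sum of $m$ independent centered Bernoulli variables, hence bounded in an interval of length $m$ (namely $[-mp, m(1-p)] \subseteq$ an interval of width $m$), and these are independent across the distinct column indices $j-1$. Each product term $a_{\ell,j}\ip{\vect{a}_{n-1}}{\tilde{\vect{a}}_{j-1}}$ is then a bounded random variable; multiplying by $a_{\ell,j} \in \{0,1\}$ only shrinks the range, so each summand lies in an interval of width at most $m$. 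The summands for distinct $j$ are independent given $\vect{a}_{n-1}$ (the sets of coordinates/columns involved are disjoint once we fix $\vect{a}_{n-1}$ — again modulo checking the $j = n+1$ overlap, which affects at most one term and can be absorbed). By Hoeffding's lemma, each centered summand has mgf bounded by $e^{t^2 m^2/8}$, and with at most $N-1 < N$ such terms,
\begin{align}
    \E\!\left[\left. e^{tS_{\ell,n}} \,\right| \vect{a}_{n-1} \right] \leq e^{\frac{t^2}{8}m^2 (N-1)} \leq e^{\frac{t^2}{8}m N}. \label{eq:mgf-cond}
\end{align}
Wait — I need width bounds more carefully: the range width of $a_{\ell,j}\ip{\vect{a}_{n-1}}{\tilde{\vect{a}}_{j-1}}$ is at most $m$, so Hoeffding gives $e^{t^2 m^2/8}$ per term, giving $e^{\frac{t^2}{8} m^2 N}$, not $e^{\frac{t^2}{8} m N}$. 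To recover the claimed bound I would instead note that conditioned on $\vect{a}_{n-1}$, the variable $S_{\ell,n}$ is a sum over $j \neq n$ and over the $m$ active coordinates $i$, i.e.\ a sum of at most $mN$ independent terms each of the form $a_{\ell,j}\tilde a_{i,j-1}$, each bounded in an interval of width $1$; Hoeffding then yields $\E[e^{tS_{\ell,n}} \mid \vect{a}_{n-1}] \leq e^{\frac{t^2}{8} m N}$. Since $m = \wH(\vect{a}_{n-1}) \leq L$ deterministically, this is at most $e^{\frac{t^2}{8} L N}$, and averaging over $\vect{a}_{n-1}$ gives \eqref{eq:lemmaS_B} — with the strict inequality coming from $m \leq L$ with positive probability of strict inequality, or from $N - 1 < N$.

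The main obstacle is the independence bookkeeping: verifying that, conditioned on $\vect{a}_{n-1}$, the collection $\{a_{\ell,j}\tilde a_{i,j-1} : j \in \{1,\ldots,N\}\setminus\{n\},\ i \text{ active}\}$ consists of independent random variables. The subtlety is that $a_{\ell,j}$ and $\tilde a_{i,j'-1}$ can refer to the same matrix entry when $j = j' - 1$, and the index $j-1$ can equal $n-1$ (handled by restricting $j \neq n$) or can, for $j = n+1$, make $\tilde{\vect{a}}_{j-1} = \tilde{\vect{a}}_n$ — a column on which we have \emph{not} conditioned, so that term is genuinely random but still independent of the conditioning column $\vect{a}_{n-1}$ and of all other terms, since distinct columns of $\bm{A}$ are independent. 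Once one argues that all the entries appearing are distinct matrix entries (equivalently, that the relevant $(i,j)$ pairs are distinct), independence follows from the i.i.d.\ assumption, Hoeffding applies, and the bound follows. I would spell out this index-disjointness claim carefully, as it is the one place where an off-by-one error could invalidate the argument.
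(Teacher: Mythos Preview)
Your argument for the expectation is essentially correct; the cleanest phrasing is that for each $j \neq n$ and each $i$, the centered entry $\tilde a_{i,j-1}$ is independent of the product $a_{\ell,j}\,a_{i,n-1}$ (since $j-1 \neq j$ and $j-1 \neq n-1$) and has mean zero.

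The MGF argument, however, has a genuine gap: the independence claim you flag as ``the main obstacle'' is not merely unchecked but false. Conditioned on $\vect a_{n-1}$, the terms $a_{\ell,j}\tilde a_{i,j-1}$ are \emph{not} mutually independent, for two separate reasons. First, for a fixed $j$ and two distinct active indices $i \neq i'$, the terms $a_{\ell,j}\tilde a_{i,j-1}$ and $a_{\ell,j}\tilde a_{i',j-1}$ share the factor $a_{\ell,j}$; they are uncorrelated but not independent (both vanish simultaneously exactly when $a_{\ell,j}=0$). In particular the matrix entries appearing are \emph{not} all distinct: the entry $a_{\ell,j}$ shows up $m$ times, once per active $i$. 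Second, if $\ell$ itself is active (i.e.\ $a_{\ell,n-1}=1$), the row-$\ell$ contributions form a dependency chain: the summand for $j$ involves $a_{\ell,j}$ and $a_{\ell,j-1}$, the summand for $j+1$ involves $a_{\ell,j+1}$ and $a_{\ell,j}$, and so on around the cycle. Hoeffding's lemma on ``$mN$ independent width-$1$ terms'' therefore does not apply.

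The first obstruction is removable by conditioning additionally on all of row $\ell$; then for $i\neq\ell$ the remaining entries $a_{i,j-1}$ are genuinely fresh and i.i.d., and Hoeffding yields a factor $e^{\frac{t^2}{8}(L-1)(N-1)}$. But this extra conditioning turns the $i=\ell$ contribution into a deterministic shift $c=\sum_{j\neq n} x_{\ell,j}x_{\ell,n-1}(x_{\ell,j-1}-p)$, and one must still average $e^{tc}$ over the row-$\ell$ variables. Because of the chain structure this average is not a product of independent expectations. The paper resolves it by integrating out the $x_{\ell,j}$ sequentially in a carefully chosen order --- starting with $x_{\ell,n}$, which appears in only one factor, then $x_{\ell,n+1}$, and so on --- each step contributing at most $e^{t^2/8}$, for a further factor $e^{\frac{t^2}{8}(N-1)}$ and a total of $e^{\frac{t^2}{8}L(N-1)}$. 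This sequential peeling is the idea missing from your proposal.
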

\noindent
A proof of Lemma~\ref{lem:Sigma} is given in Appendix~\ref{appendixLemmaSigma}.

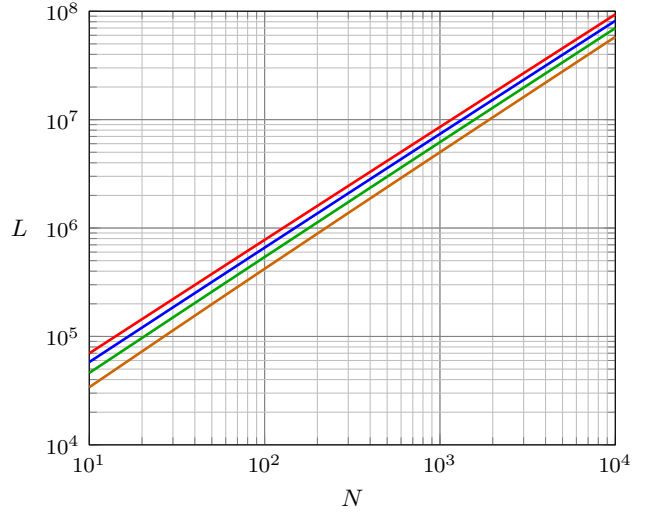
\begin{figure}
\centering
\definecolor{mycolor1}{rgb}{0,0.65,0}
\definecolor{mycolor2}{rgb}{0.8, 0.4, 0}


\begin{tikzpicture}
    \begin{loglogaxis}[
        ylabel near ticks,
        ylabel style={rotate=-90},
        small,
        width=8.5cm,
        xmin = 10,
        xmax = 10^4,
        ymin = 10^4,
        ymax = 10^8,
        xlabel=$N$,
        ylabel=$L$,
        grid=both,
        major grid style=gray,
        minor grid style=lightgray
    ]

    \addplot[color=mycolor2, line width=1pt] coordinates{
    (10,34002)
    (30,113563)
    (60,241653)
    (100,420558)
    (300,1376234)
    (600,2896648)
    (1000,5004563)
    (1300,6623881)
    (2000,10488293)
    (2300,12172543)
    (3000,16152375)
    (4000,21933480)
    (5000,27801557)
    (6000,33738916)
    (7000,39733872)
    (8000,45778122)
    (10000,57991072)
    };
    \addplot[color=mycolor1, line width=1pt] coordinates{
    (10,46058)
    (30,149591)
    (60,313558)
    (100,540231)
    (300,1734291)
    (600,3611697)
    (1000,6195106)
    (1300,8170821)
    (2000,12866351)
    (2300,14906644)
    (3000,19716982)
    (4000,26684042)
    (5000,33737646)
    (6000,40860198)
    (7000,48040076)
    (8000,55269021)
    (10000,69850794)
    };
    \addplot[color=blue, line width=1pt] coordinates{
    (10,57992)
    (30,185311)
    (60,384906)
    (100,659041)
    (300,2090110)
    (600,4322640)
    (1000,7379211)
    (1300,9709646)
    (2000,15232526)
    (2300,17627293)
    (3000,23264556)
    (4000,31412597)
    (5000,39646881)
    (6000,47949878)
    (7000,56310008)
    (8000,64719042)
    (10000,81660591)
    };
    \addplot[color=red, line width=1pt] coordinates{
    (10,69851)
    (30,220836)
    (60,455896)
    (100,777286)
    (300,2444423)
    (600,5030776)
    (1000,8558869)
    (1300,11242833)
    (2000,17590376)
    (2300,20338492)
    (3000,26800105)
    (4000,36125536)
    (5000,45536985)
    (6000,55016973)
    (7000,64553950)
    (8000,74139710)
    (10000,93434406)
    };
    \end{loglogaxis}
\end{tikzpicture}
\caption{
Value of $L$ required for the right-hand side of (\ref{eq:ThmBound})
to equal $10^{-3}$, $10^{-6}$, $10^{-9}$, $10^{-12}$ (from bottom to top)
for $p = 1/2$, and $\tilde\eta = 1/8$.
}
\label{fig:LNCurvesWithConstErrProb}
\end{figure}

Let us define the event
\begin{align}
    \calE_{a_{\ell,n}} := \big\{ \xi_\ell(\bm{a}_{n-1}) \neq a_{\ell,n} \big\}.
\end{align}
Then by (\ref{eq:CorrelationA}) we can upper bound (\ref{eq:ProbSumOfTwoErrEvents}) as
\begin{align}
    \Pr\!&\left[\calE_{a_{\ell,n}}\right] \nonumber \\
            &\,\,\,\leq \Pr\!\left[\left.S_{\ell,n} \geq \theta - \eta_\ell \,\right| a_{\ell,n}=0\right] \nonumber \\
            &\,\,\,\quad + \Pr\!\left[\left.\ip{\vect{a}_{n-1}}{\tilde{\vect{a}}_{n-1}} + S_{\ell,n} < \theta - \eta_\ell \,\right| a_{\ell,n}=1\right]. \label{eq:UppBound_Eln}        
\end{align}
As (global) threshold we choose
\begin{align}
    \theta := \frac{1}{4} \sum_{a\in\{0,1\}} \E\!\left[\left.\ip{\vect{a}_{n-1}}{\vect{w}_\ell} \,\right| a_{\ell,n}=a\right],
\end{align}
cf.\ Figure~\ref{fig:SketchProbDist}, and it can be shown that 
\begin{align}
    \theta = \frac{1}{4} \E\!\left[\ip{\vect{a}_{n-1}}{\tilde{\vect{a}}_{n-1}}\right]. \label{eq:ThresholdDef_B}
\end{align}
Note that $S_{\ell,n}$ depends on $a_{\ell,n}$.
To get rid of the conditioning on $a_{\ell,n}$ in (\ref{eq:UppBound_Eln}), 
we observe that an error, i.e., $\calE_{a_{\ell,n}}$ implies either
$|S_{\ell,n}| \geq \theta - \eta_\ell$, or $|S_{\ell,n}| < \theta - \eta_\ell$ and
$\ip{\vect{a}_{n-1}}{\tilde{\vect{a}}_{n-1}} + S_{\ell,n} < \theta - \eta_\ell$, cf.\ Figure~\ref{fig:SketchProbDist}.
Thus by the union bound, we obtain
\begin{align}
    \Pr\!\left[\calE_{a_{\ell,n}}\right]
            &\leq \Pr\!\left[|S_{\ell,n}| \geq \theta-\eta_\ell \right] \nonumber \\
            &\qquad+ \Pr\!\left[\ip{\vect{a}_{n-1}}{\tilde{\vect{a}}_{n-1}} < 2(\theta-\eta_\ell) \right] \\
            &\leq \Pr\!\left[|S_{\ell,n}| \geq \theta-\eta \right] \nonumber \\ 
            &\qquad+ \Pr\!\left[\ip{\vect{a}_{n-1}}{\tilde{\vect{a}}_{n-1}} < 2(\theta+\eta) \right] \label{eq:UppBound_error_aln_A} \\
            &= \Pr\!\left[|S_{\ell,n}| \geq \theta(1-\tilde\eta) \right] \nonumber \\ 
            &\qquad+ \Pr\!\left[\ip{\vect{a}_{n-1}}{\tilde{\vect{a}}_{n-1}} < 2\theta(1+\tilde\eta) \right], \label{eq:UppBound_error_aln_B}
\end{align}
where in (\ref{eq:UppBound_error_aln_A}) we applied (\ref{eq:DefBoundedNoise1}),
and (\ref{eq:UppBound_error_aln_B}) holds because of (\ref{eq:LConstraint}).

Now, we apply the Chernoff bound \cite{Chernoff1952} to both terms on the right-hand side of (\ref{eq:UppBound_error_aln_B}).
Thus, we have
\begin{align}
    \Pr\!\left[S_{\ell,n} \geq \theta(1-\tilde\eta)\right]
            &\leq \min_{t>0} \frac{\E\!\left[e^{tS_{\ell,n}}\right]}{e^{t\theta(1-\tilde\eta)}} \label{eq:ChernS_elln_A} \\
            &< \min_{t>0} \frac{e^{\frac{t^2}{8}LN}}{e^{t\theta(1-\tilde\eta)}} \label{eq:ChernS_elln_B} \\
            &= e^{-\frac{2\theta^2(1-\tilde\eta)^2}{LN}}. \label{eq:ChernS_elln_C}
\end{align}
The step from (\ref{eq:ChernS_elln_A}) to (\ref{eq:ChernS_elln_B}) follows from (\ref{eq:lemmaS_B}).
The bound (\ref{eq:ChernS_elln_B}) is minimized by $t_{\min} = 4\theta(1-\tilde\eta)/(LN)$
which implies (\ref{eq:ChernS_elln_C}).

The lower tail of $S_{\ell,n}$, i.e., $\Pr\!\left[S_{\ell,n} \leq -(\theta-\eta)\right]$
can be upper bounded analogously.
Thus by the union bound of both tails, we obtain
\begin{align}
    \Pr\!\left[|S_{\ell,n}| \geq \theta(1-\tilde\eta) \right] < 2e^{-\frac{2\theta^2(1-\tilde\eta)^2}{LN}}. \label{eq:UpperBound_S}
\end{align}

As for the other term on the right-hand side of (\ref{eq:UppBound_error_aln_B}),
we note
\begin{align}
    \ip{\vect{a}_{n-1}}{\tilde{\vect{a}}_{n-1}}
        &= \sum_{\ell=1}^L a_{\ell,n-1}(a_{\ell,n-1}-p) \\
        &= (1-p)\sum_{\ell=1}^L a_{\ell,n-1}
\end{align}
since $a_{1,n-1},\ldots,a_{L,n-1} \,\overset{\text{i.i.d.}}{\sim}\, \Ber(p)$, thus
\begin{align}
    \frac{1}{1-p} \ip{\vect{a}_{n-1}}{\tilde{\vect{a}}_{n-1}} \,\sim\, \Bin(L,p), \label{eq:InnerProd_Bin}
\end{align}
which together with (\ref{eq:ThresholdDef_B}) implies (cf.\ (\ref{eq:DefThreshold}))
\begin{align}
    \theta = \frac{1}{4}Lp(1-p). \label{eq:ThresholdDef_A}
\end{align}
Then, inserting (\ref{eq:ThresholdDef_A}) into the right summand on the
right-hand side of (\ref{eq:UppBound_error_aln_B}) yields
\begin{align}
    &\Pr\!\left[\ip{\vect{a}_{n-1}}{\tilde{\vect{a}}_{n-1}} < \frac{1+\tilde\eta}{2}Lp(1-p) \right] \nonumber \\
        &\qquad\qquad = \Pr\!\left[\frac{1}{1-p}\ip{\vect{a}_{n-1}}{\tilde{\vect{a}}_{n-1}} < \frac{1+\tilde\eta}{2}Lp \right] \label{eq:ChernIP_B} \\
        &\qquad\qquad \leq e^{-D_{\text{KL}}\left(\left.\!\frac{1+\tilde\eta}{2}p \right\| p\right) L}, \label{eq:ChernIP_C}
\end{align}
with Kullback--Leibler divergence (or relative entropy)
\begin{align}
    D_{\text{KL}}(p_1\!\!\parallel\!p_2) := p_1 \ln\!\left(\frac{p_1}{p_2}\right) + (1-p_1) \ln\!\left(\frac{1-p_1}{1-p_2}\right)\!, \label{eq:DefKLDivergence}
\end{align}
for $0<p_1,p_2<1$, cf.\ \cite{CoverThomasBook}.
From (\ref{eq:ChernIP_B}) to (\ref{eq:ChernIP_C}) we applied Lemma~\ref{lem:ChBoundBinomial}
(which is stated in Appendix~\ref{appendixChBoundBinomial})
with $1-\delta = (1+\tilde\eta)/2$, $0<\tilde\eta<1$, because of (\ref{eq:InnerProd_Bin}).
Note that in general
\begin{align}
    D_{\text{KL}}(p_1\!\!\parallel\!p_2) \neq D_{\text{KL}}(p_2\!\!\parallel\!p_1),
\end{align}
and for all $0<p_1,p_2<1$
\begin{align}
    D_{\text{KL}}(p_1\!\!\parallel\!p_2) \geq 0
\end{align}
with equality if and only if $p_1=p_2$.

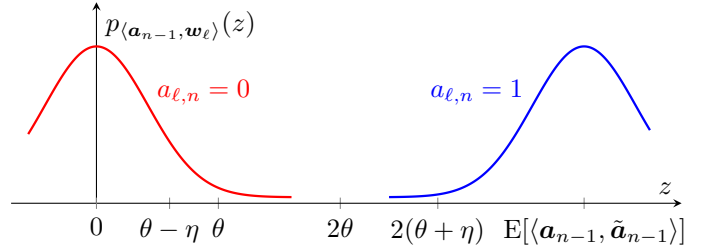
\begin{figure}
\centering
\centering
\begin{tikzpicture}
\begin{axis}[
	width=10.4cm, height=4.25cm,
    ymax = 0.5, ymin=-0.0005,
    ytick=\empty,
    xmax = 12, xmin=-1.75,
    xtick={0, 1.5, 2.5, 5, 7, 10},
    xticklabels={$0$, $\theta-\eta$, $\theta$, $2\theta$, $2(\theta+\eta)$, $\,\,\,\,\,\E[ \ip{\vect{a}_{n-1}}{\tilde{\vect{a}}_{n-1}} ]$},
    extra x ticks=0,
    extra x tick labels=$0$,
	axis x line=center,
	axis y line=center,
	xlabel=$z$,
    ylabel=$p_{\ip{\vect{a}_{n-1}}{\vect{w}_\ell}}(z)$
]

\addplot[
    line width=0.03cm,
	red,
	domain=-1.4:4.0,
	samples=200,
]{0.015 + 0.375*exp(-x^2/2.25)};

\node[] at (axis cs: 2.2,0.275) {$\textcolor{red}{a_{\ell,n}=0}$};

\addplot[
    line width=0.03cm,
	blue,
	domain=6.0:11.35,
	samples=200,
]{0.015 + 0.375*exp(-(x-10)^2/2.25)};

\node[] at (axis cs: 7.8,0.275) {$\textcolor{blue}{a_{\ell,n}=1}$};

\end{axis}
\end{tikzpicture}
\caption{
Sketch of the probability distribution of (\ref{eq:CorrelationA}) for the realization
$\ip{\vect{a}_{n-1}}{\tilde{\vect{a}}_{n-1}} = \E[\ip{\vect{a}_{n-1}}{\tilde{\vect{a}}_{n-1}}]$
and the two cases $a_{\ell,n}=0$ (peak on the left) and $a_{\ell,n}=1$ (peak on the right).
}
\label{fig:SketchProbDist}
\end{figure}

Finally, we obtain
\begin{align}
    \Pr\!\left[\calE_{a_{\ell,n}}\right]
        &< 2e^{-\frac{2\theta^2(1-\tilde\eta)^2}{LN}} + e^{-D_{\text{KL}}\left(\left.\!\frac{1+\tilde\eta}{2}p \right\| p\right) L} \label{eq:UpperBoundSingleNeuron_A} \\
        &= 2e^{-\frac{1}{8}(1-\tilde\eta)^2p^2(1-p)^2\frac{L}{N}} + e^{-D_{\text{KL}}\left(\left.\!\frac{1+\tilde\eta}{2}p \right\| p\right) L}. \label{eq:UpperBoundSingleNeuron_B}
\end{align}
Inequality (\ref{eq:UpperBoundSingleNeuron_A}) follows from (\ref{eq:UppBound_error_aln_B})
together with the two upper bounds (\ref{eq:UpperBound_S}) and (\ref{eq:ChernIP_C}).
In (\ref{eq:UpperBoundSingleNeuron_B}) we inserted (\ref{eq:ThresholdDef_A}).

The upper bound in (\ref{eq:UpperBoundSingleNeuron_B}) is independent on $\ell$ and $n$,
and thus (\ref{eq:SumOfErrProb}) yields (\ref{eq:ThmBound})
which concludes the proof. \hfill $\blacksquare$


\section{Multi-pass Memorization}
\label{sec:MultiPassMem}

Perfect memorization can also be achieved via a certain
least-squares problem, and solving this least-squares problem
via stochastic gradient descent can be phrased as
multi-pass learning according to (\ref{eq:WeightIncMulti}).

Specifically, for fixed $\ell\in\{1,\ldots,L\}$,
consider the least-squares problem
\begin{align}
    \min_{\vect{w}_\ell} \sum_{n=1}^N \left|\ip{\vect{a}_{n-1}}{\vect{w}_\ell} - a_{\ell,n}\right|^2\
        = \min_{\vect{w}_\ell} \big\lVert \tilde{\bm{A}}\vect{w}_\ell - \tilde{\vect{a}}_\ell \big\rVert^2, \label{eq:compLS}   
\end{align}
where
\begin{align}
    \tilde{\bm{A}} := \begin{pmatrix} 
                            \bm{a}_{N}^\T \\
                            \bm{a}_{1}^\T \\
                            \vdots \\                    
                            \bm{a}_{N-1}^\T
                      \end{pmatrix} \in \R^{N\times L},
    \quad \tilde{\bm{a}}_\ell := \begin{pmatrix} 
                                    a_{\ell,1} \\
                                    \vdots \\
                                    a_{\ell,N}
                                  \end{pmatrix} \in \R^N. \label{eq:defTildeA}
\end{align}
Note that $\tilde{\bm{A}}$ is the transposed matrix of
$(\bm{a}_N,\bm{a}_1,\ldots,\bm{a}_{N-1})\in\R^{L\times N}$, i.e.,
of the one time-step cyclic shifted version of $\bm{A}$, 
and $\tilde{\bm{a}}_\ell$ is the $\ell$-th row of $\bm{A}$
turned into a column vector.

If $\rank(\tilde{\bm{A}}) = N$, then
\begin{align}
    \min_{\vect{w}_\ell\in\R^L} \big\lVert \tilde{\bm{A}}\vect{w}_\ell - \tilde{\vect{a}}_\ell \big\rVert^2 = 0, \label{eq:LS_zeroError} 
\end{align}
which implies that $\bm{A}$ is (perfectly) memorizable, i.e., $\Pr[\calE_{\bm{A}}]=0$.
For $L\geq N$, $0<p\leq 1/2$, $\bm{A}\overset{\text{i.i.d.}}{\sim}\Ber(p)^{L\times N}$,
it follows from \cite{Tikhomirov2019} that
\begin{align}
    \Pr\!\left[\rank(\tilde{\bm{A}}) = N\right] \geq 1 - \big(1-p+o_N(1)\big)^N,
\end{align}
where $o_N(1)$ denotes a sequence which converges to zero,
i.e., $\lim_{N\to\infty} o_N(1) = 0$.
Thus, any matrix $\bm{A}\overset{\text{i.i.d.}}{\sim}\Ber(p)^{L\times N}$
with $L\geq N$, and in particular with
\begin{align}
    L=N \label{eq:IterativeNconstraint}
\end{align}
is memorizable as $N\to\infty$.

Clearly, the least-squares problem (\ref{eq:compLS})
could be solved by gradient descent as follows.
Starting from some initial guess $\vect{w}_\ell^{(0)}$ we proceed by
\begin{align}
    \vect{w}_\ell^{(n)} &= \vect{w}_\ell^{(n-1)} + \beta^{(n)}\tilde{\bm{A}}^\T \big(\tilde{\vect{a}}_\ell - \tilde{\bm{A}}\vect{w}_\ell^{(n-1)} \big) \label{eq:GradientDes_A} 
\end{align}
for $n=1,\ldots,K$, $K\in\N$, and with step size $\beta^{(n)}>0$.
The recursion (\ref{eq:GradientDes_A}) with constant $\beta^{(n)} = \beta$
converges to a minimizer of (\ref{eq:compLS}) if
\begin{align}
    0 < \beta < \frac{2}{\lambda_{\max}(\tilde{\bm{A}}^\T\tilde{\bm{A}})},
\end{align}
where $\lambda_{\max}(\tilde{\bm{A}}^\T\tilde{\bm{A}}) > 0$ is the largest
eigenvalue of $\tilde{\bm{A}}^\T\tilde{\bm{A}}$.

Finally, replacing gradient descent as in (\ref{eq:GradientDes_A})
by stochastic gradient descent yields
\begin{IEEEeqnarray}{rCl}
    \vect{w}_\ell^{(n)} &=& \vect{w}_\ell^{(n-1)} + \beta^{(n)}\Big(a_{\ell,n} - \big\langle\vect{a}_{n-1},\vect{w}_\ell^{(n-1)}\big\rangle \Big)\vect{a}_{n-1}, \label{eq:SGDupdateRule}
    \IEEEeqnarraynumspace
\end{IEEEeqnarray}
which is (\ref{eq:WeightIncMulti}).
Again, as in (\ref{eqn:RepeatSignalA}) the column indices
are taken modulo $N$ and $\vect{a}_0 := \vect{a}_N$.
It is shown in \cite{StrohmerVershynin2009} that if at every iteration
the column indices are chosen randomly, then (\ref{eq:SGDupdateRule}) converges
exponentially in expectation to a solution of (\ref{eq:LS_zeroError}).


\section{Memorization Capacity}
\label{sec:Capacity}

Let $\calA_{\text{typical}}$ be a typical set of matrices 
(in any standard sense of "typical sequences" \cite{CoverThomasBook}) for the
random matrix $\bm{A}\overset{\text{i.i.d.}}{\sim} \Ber(p)^{L\times N}$
and $|\calA_{\text{typical}}|$ denotes the cardinality of $\calA_{\text{typical}}$.
Then, we have
\begin{align}
    \lim_{L\to\infty} \frac{1}{L} \log_2 |\calA_{\text{typical}}| = H_{\text{b}}(p)N, \label{eq:NumbTypMatrices}
\end{align}
with the binary entropy function
\begin{align}
    H_{\text{b}}(p) := -p\log_2(p) - (1-p)\log_2(1-p)
\end{align}
for $0<p<1$, cf. \cite{CoverThomasBook}.

The absolute capacity of a network is equal to the total number of bits
which can be memorized by the network, thus
\begin{align}
    C_{\text{absolute}} \geq \log_2 |\calA_{\text{typical}}| \,\, \left[\,\text{bits}\,\right]\!. \label{eq:absCapacity}
\end{align}

\subsection{Capacity per Neuron}

From (\ref{eq:NumbTypMatrices}) and (\ref{eq:absCapacity}) it follows that
the asymptotic memorization capacity in bits per neuron is lower bounded by
\begin{align}
    C_{\text{neuron}} \geq H_{\text{b}}(p)N \,\, \left[\,\text{bits per neuron}\,\right]\!.
\end{align}

For the single-pass memorization rule (\ref{eq:SinglePassDeltaWeight})
we have (\ref{eq:Nconstraint}) (which is a consequence of Theorem~\ref{thm:MainThm}),
and we thus obtain
\begin{align}
    C_{\text{single-pass}} \geq C_{p,\tilde{\eta}}\frac{L}{\ln(L)} \,\, \left[\,\text{bits per neuron}\,\right] \label{eq:CsinglePerNeuron}
\end{align}
with constant
\begin{align}
    C_{p,\tilde{\eta}} := \frac{1}{16}(1-\tilde{\eta})^2p^2(1-p)^2 H_{\text{b}}(p) > 0,
\end{align}
for $0 < p,\tilde{\eta} < 1$.
For the multi-pass memorization rule (\ref{eq:GradientDes_A}),
we have (cf.\ (\ref{eq:IterativeNconstraint}))
\begin{align}
    C_{\text{multi-pass}} \geq H_{\text{b}}(p)L \,\, \left[\,\text{bits per neuron}\,\right]\!.
\end{align}
Both memorization capacities $C_{\text{single-pass}}$ and $C_{\text{multi-pass}}$
(in bits per neuron) are unbounded in $L$.

\subsection{Capacity per Connection (Synapse)}

The capacity per connection (i.e., per nonzero weight) is
\begin{align}
    C_{\text{connection}} = \frac{C_{\text{neuron}}}{\wH(\vect{w}_\ell)},
\end{align}
where $\wH(\vect{w}_\ell)$ is the Hamming weight of the $\ell$-th weight vector.
Since for both modes of memorization $\wH(\vect{w}_\ell) = L$, for all $\ell$, we obtain
\begin{align}
    C_{\text{single-pass}} &\geq C_{p,\tilde{\eta}}\frac{1}{\ln(L)} \,\, \left[\,\text{bits per connection}\,\right]\!, \\
    C_{\text{multi-pass}} &\geq H_{\text{b}}(p) \,\, \left[\,\text{bits per connection}\,\right]\!.
\end{align}
Thus, in bits per connection the capacity $C_{\text{single-pass}}$ seems to vanish,
whereas $C_{\text{multi-pass}}$ does not vanish as $L\to\infty$.

\subsection{Comparison with the Hopfield Network}

The capacity of the Hopfield model with $L$ neurons is
$L/(2\ln(L))$ vectors with the Hebbian learning rule \cite{McElieceEtAl1987} and
$L/\sqrt{2\ln(L)}$ vectors with the Storkey learning rule \cite{StorkeyValabregue1999}.
However, for a fair comparison with the results of the present paper,
it should be noted that each vector consists of $L$ random bits, resulting
in a capacity of $L/(2\ln(L))$ bits per neuron
and $L/\sqrt{2\ln(L)}$ bits per neuron, respectively.
Thus, the capacity of the Hebbian learning rule
is on the same order as the capacity of the
single-pass memorization rule, cf.\ (\ref{eq:CsinglePerNeuron}).


\section{Conclusion}
\label{sec:Conclusion}

We have studied the capability of a ``spiking'' dynamical neural network
model to memorize random firing sequences by a form of quasi-Hebbian learning.
Our main result was an upper bound on the probability
that instantaneous memorization is not perfect.
From this bound, the instantaneous-memorization capacity
of a network with $L$ neurons is (at least) $O\big(L/\ln(L)\big)$ bits per neuron.
By contrast, iterative (i.e., multi-pass) learning is shown to achieve
a capacity of $O(L)$ bits per neuron and $O(1)$ bits per connection/synapse.
These results may be useful for understanding the functions of
short-term memory and long-term memory in neuroscience
and their potential analogs in neuromorphic hardware.




\newpage


\newpage
$\,$
\newpage

\appendix 

\subsection{Proof of Lemma~\ref{lem:Sigma}}
\label{appendixLemmaSigma}

First, note that
\begin{align}
    \ip{\vect{a}_{n-1}}{\tilde{\vect{a}}_{j-1}}
        &= \sum_{i=1}^L a_{i,n-1}\tilde{a}_{i,j-1} \\
        &= \sum_{i=1}^L a_{i,n-1}(a_{i,j-1} - p).
\end{align}
Now, for all $(i,j)\in\{1,\ldots,L\} \times \{1,\ldots,N\}$
we introduce the random variables $X_{i,j} \overset{\text{i.i.d.}}{\sim} \Ber(p)$.
Then, we have (see definition in (\ref{eq:Def_selln}))
\begin{align}
    S_{\ell,n} = \sum_{j\in\{1,\ldots,N\}\setminus\{n\}} \sum_{i=1}^L X_{\ell,j} X_{i,n-1} (X_{i,j-1} - p) \label{eq:DefSigmaWX}
\end{align}
and
\begin{align}
    \E&[S_{\ell,n}] \nonumber \\
        &\quad= \sum_{j\in\{1,\ldots,N\}\setminus\{n\}} \sum_{i=1}^L \E[X_{\ell,j} X_{i,n-1} (X_{i,j-1} - p)] \label{eq:X_linExpectation} \\
        &\quad= \sum_{j\in\{1,\ldots,N\}\setminus\{n\}} \sum_{i=1}^L \E[X_{\ell,j} X_{i,n-1}] \underbrace{\E[X_{i,j-1} - p]}_{\,\,\,\,\,=\,0} \label{eq:X_indep} \\
        &\quad= 0,
\end{align}
which proves (\ref{eq:lemmaS_A}).
In (\ref{eq:X_linExpectation}) we used linearity of expectation, 
and equation (\ref{eq:X_indep}) holds because the random variables
$X_{i,j-1}$ are independent of $X_{\ell,j} X_{i,n-1}$ for all
$(i,j)\in\{1,\ldots,L\}\times\{1,\ldots,N\}\setminus\{n\}$.

Using equation (\ref{eq:DefSigmaWX}), we have for all $t\in\R$
\begin{align}
    &\E\!\left[e^{tS_{\ell,n}}\right] \nonumber \\
        &\,= \E\!\left[ \prod_{j\in\{1,\ldots,N\}\setminus\{n\}} \prod_{i=1}^L e^{t X_{\ell,j} X_{i,n-1} (X_{i,j-1} - p)} \right] \\
        &\,= \E\!\left[ \prod_{j\in\{1,\ldots,N\}\setminus\{n\}} e^{t X_{\ell,j} X_{\ell,n-1} (X_{\ell,j-1} - p)} \right. \nonumber \\
        &\qquad\qquad \left. \cdot\prod_{i\in\{1,\ldots,L\}\setminus\{\ell\}} e^{t X_{\ell,j} X_{i,n-1} (X_{i,j-1} - p)} \right] \label{eq:MomentGen_A} \\
        &\,= \prod_{j=1}^N \sum_{x_{\ell,j}\in\{0,1\}} p(x_{\ell,j}) \prod_{i\in\{1,\ldots,L\}\setminus\{\ell\}} \sum_{x_{i,n-1}\in\{0,1\}} p(x_{i,n-1}) \nonumber \\
        &\qquad \cdot \E\!\left[ \prod_{j\in\{1,\ldots,N\}\setminus\{n\}} e^{t x_{\ell,j} x_{\ell,n-1} (x_{\ell,j-1} - p)} \right. \nonumber \\
        &\qquad\qquad \left. \cdot \prod_{i\in\{1,\ldots,L\}\setminus\{\ell\}} e^{t x_{\ell,j} x_{i,n-1} (X_{i,j-1} - p)} \right] \label{eq:MomentGen_B} \\
        &\,= \prod_{j=1}^N \sum_{x_{\ell,j}\in\{0,1\}} p(x_{\ell,j}) \prod_{i\in\{1,\ldots,L\}\setminus\{\ell\}} \sum_{x_{i,n-1}\in\{0,1\}} p(x_{i,n-1}) \nonumber \\
        &\qquad \cdot \prod_{j\in\{1,\ldots,N\}\setminus\{n\}} e^{t x_{\ell,j} x_{\ell,n-1} (x_{\ell,j-1} - p)} \nonumber \\
        &\qquad\qquad \cdot \prod_{i\in\{1,\ldots,L\}\setminus\{\ell\}} \E\!\left[ e^{t x_{\ell,j} x_{i,n-1} (X_{i,j-1} - p)} \right]. \label{eq:MomentGen_C}
\end{align}
In equation (\ref{eq:MomentGen_A}) we took the factor for $i=\ell$ out of the product on the right-hand side.
The step from (\ref{eq:MomentGen_A}) to (\ref{eq:MomentGen_B}) uses the law of total probability 
(conditioning on channel $\ell$ and time step $n-1$) together with the shorthand notation
\begin{align}
    p(x_{i,j}) := \Pr[X_{i,j}=x_{i,j}]\,,
\end{align}
and the fact that the remaining random variables $X_{i,j-1}$
are independent on the conditioning random variables $X_{\ell,j}$ and $X_{i,n-1}$
for $(i,j)\in\{1,\ldots,L\}\setminus\{\ell\}\times\{1,\ldots,N\}\setminus\{n\}$.
Finally, in (\ref{eq:MomentGen_C}) we used linearity of expectation
and independence of the remaining random variables.

Now, we upper bound
\begin{align}
    \E\!\left[ e^{t x_{\ell,j} x_{i,n-1} (X_{i,j-1} - p)} \right]
        &\leq e^{(t x_{\ell,j} x_{i,n-1})^2/8} \label{eq:UppBoundAlanSpencer_A} \\
        &\leq e^{t^2/8}, \label{eq:UppBoundAlanSpencer_B}
\end{align}
where (\ref{eq:UppBoundAlanSpencer_A}) follows from the inequality \cite[Lemma A.1.6]{AlonSpencerBook}
\begin{align}
        pe^{\lambda(1-p)} + (1-p)e^{-\lambda p} \leq e^{\lambda^2/8}, 
\end{align}
for all $0<p<1$, $\lambda\in\R$,
and inequality (\ref{eq:UppBoundAlanSpencer_B}) holds because $x_{\ell,j}, x_{i,n-1}\in\{0,1\}$.
Thus,
\begin{align}
    &\E\!\left[e^{tS_{\ell,n}}\right] \nonumber \\
        &\,\leq \prod_{j=1}^N \sum_{x_{\ell,j}\in\{0,1\}} p(x_{\ell,j}) \prod_{i\in\{1,\ldots,L\}\setminus\{\ell\}} \sum_{x_{i,n-1}\in\{0,1\}} p(x_{i,n-1}) \nonumber \\
        &\qquad \cdot \prod_{j\in\{1,\ldots,N\}\setminus\{n\}} e^{t x_{\ell,j} x_{\ell,n-1} (x_{\ell,j-1} - p)} \cdot e^{\frac{t^2}{8}(L-1)} \label{eq:uppBoundMG_A} \\
        &= e^{\frac{t^2}{8}(L-1)(N-1)} \prod_{j=1}^N \sum_{x_{\ell,j}\in\{0,1\}} p(x_{\ell,j}) \nonumber \\
        &\qquad \cdot \prod_{j\in\{1,\ldots,N\}\setminus\{n\}} e^{t x_{\ell,j} x_{\ell,n-1} (x_{\ell,j-1} - p)} \label{eq:uppBoundMG_B} \\
        &= e^{\frac{t^2}{8}(L-1)(N-1)} \sum_{x_{\ell,n-1}\in\{0,1\}} p(x_{\ell,n-1}) \nonumber \\
        &\qquad\quad \cdot\sum_{x_{\ell,n-2}\in\{0,1\}} p(x_{\ell,n-2}) e^{t x_{\ell,n-1} x_{\ell,n-1} (x_{\ell,n-2} - p)} \nonumber \\
        &\hspace{1.9cm} \vdots \nonumber \\
        &\qquad\quad \cdot \sum_{x_{\ell,1}\in\{0,1\}} p(x_{\ell,1}) e^{t x_{\ell,2} x_{\ell,n-1} (x_{\ell,1} - p)} \nonumber \\
        &\qquad\quad \cdot \sum_{x_{\ell,N}\in\{0,1\}} p(x_{\ell,N}) e^{t x_{\ell,1} x_{\ell,n-1} (x_{\ell,N} - p)} \nonumber \\
        &\hspace{1.9cm} \vdots \nonumber \\
        &\qquad\quad \cdot \underbrace{\sum_{x_{\ell,n}\in\{0,1\}} p(x_{\ell,n}) e^{t x_{\ell,n+1} x_{\ell,n-1} (x_{\ell,n} - p)}}_{\qquad = \,\E\left[e^{tx_{\ell,n+1}x_{\ell,n-1}(X_{\ell,n}-p)}\right] \,\,\leq\,\, e^{t^2/8}} \label{eq:Reordering} \\
        &\leq e^{\frac{t^2}{8}(L-1)(N-1)} e^{\frac{t^2}{8}(N-1)} \label{eq:uppBoundMG_C} \\
        &= e^{\frac{t^2}{8}L(N-1)},
\end{align}
which proves (\ref{eq:lemmaS_B}).
The step from (\ref{eq:MomentGen_C}) to (\ref{eq:uppBoundMG_A}) follows from (\ref{eq:UppBoundAlanSpencer_B}).
The factors in the product on the right-hand side of (\ref{eq:uppBoundMG_A}) do not depend on the variables
$\{x_{i,n-1}\}_{i\in\{1,\ldots,L\}\setminus\{\ell\}}$, thus
$\prod_{i\in\{1,\ldots,L\}\setminus\{\ell\}} \sum_{x_{i,n-1}\in\{0,1\}} p(x_{i,n-1})$
evaluates to one, which results in (\ref{eq:uppBoundMG_B}).
The reordering of terms in (\ref{eq:Reordering}) is based on the following observation.
For $j\in\{1,\ldots,N\}\setminus\{n\}$, let us define the factors
\begin{align}
    y_{\ell,j}(x_{\ell,j},x_{\ell,n-1},x_{\ell,j-1}) := e^{tx_{\ell,j} x_{\ell,n-1} (x_{\ell,j-1}-p)},
\end{align}
and then we analyze the dependencies in
\begin{align}
    &\prod_{j\in\{1,\ldots,N\}\setminus\{n\}} y_{\ell,j}
        = \prod_{j\in\{1,\ldots,N\}\setminus\{n\}} e^{tx_{\ell,j} x_{\ell,n-1} (x_{\ell,j-1}-p)} \label{eq:Reordering_C} \\
        &\quad= e^{tx_{\ell,1} x_{\ell,n-1} (x_{\ell,N}-p)} e^{tx_{\ell,2} x_{\ell,n-1} (x_{\ell,1}-p)} \cdots \nonumber \\
        &\qquad\quad \cdots e^{tx_{\ell,n-1} x_{\ell,n-1} (x_{\ell,n-2}-p)} e^{tx_{\ell,n+1} x_{\ell,n-1}(\textcolor{red}{\bm{x_{\ell,n}}}-p)} \cdots \nonumber \\
        &\qquad\quad \cdots e^{tx_{\ell,N} x_{\ell,n-1}(x_{\ell,N-1}-p)} \label{eq:Reordering_A} \\
        &\quad= e^{tx_{\ell,n+1} x_{\ell,n-1}(\textcolor{red}{\bm{x_{\ell,n}}}-p)} \cdots \nonumber \\
        &\qquad\quad \cdots e^{tx_{\ell,N} x_{\ell,n-1}(x_{\ell,N-1}-p)} e^{tx_{\ell,1} x_{\ell,n-1}(x_{\ell,N}-p)} \cdots \nonumber \\
        &\qquad\quad\cdots e^{tx_{\ell,n-1} x_{\ell,n-1}(x_{\ell,n-2}-p)}. \label{eq:Reordering_B}
\end{align}
We reordered the factors from $j=1,\ldots,n-1,n+1,\ldots,N$ in (\ref{eq:Reordering_A}) to
$j=n+1,\ldots,N,1,\ldots,n-1$ in (\ref{eq:Reordering_B}).
A factor graph (we use the same conventions as in \cite{LoeligerEtAl2007}) of the ``unfolded'' product
$y_{\ell,n+1}\cdots y_{\ell,N}y_{\ell,1}\cdots y_{\ell,n-1}$ is shown in Figure~\ref{fig:FG_product}.
The variable $\textcolor{red}{\bm{x_{\ell,n}}}$ only appears in the factor $y_{\ell,n+1}$,
which breaks up the dependence chain.
Thus, closing boxes via upper bounding the corresponding expectation using (\ref{eq:UppBoundAlanSpencer_B}),
from left to right in Figure~\ref{fig:FG_product} yields inequality (\ref{eq:uppBoundMG_C}).
\hfill $\blacksquare$


\subsection{Chernoff Bound on the Lower Tail of Binomial Distribution}
\label{appendixChBoundBinomial}
  
\begin{lem} \label{lem:ChBoundBinomial}
Let $X \sim \Bin(L,p)$, then for $0< \delta < 1$
\begin{align}
    \Pr\!\left[X \leq (1-\delta)Lp\right] \leq e^{-D_{\text{KL}}\left((1-\delta)p\parallel p\right) L}, \label{eq:ChBoundBin}
\end{align}
with Kullback--Leibler divergence (or relative entropy)
\begin{align}
    D_{\text{KL}}(p_1\!\!\parallel\!p_2) := p_1 \ln\!\left(\frac{p_1}{p_2}\right) + (1-p_1) \ln\!\left(\frac{1-p_1}{1-p_2}\right)\!,
\end{align}
for $0<p_1,p_2<1$. \hfill $\square$
\end{lem}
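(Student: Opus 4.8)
The plan is to prove the standard multiplicative Chernoff bound on the lower tail of a binomial random variable by applying the generic Chernoff–Markov argument and then optimizing the exponent in closed form. Write $X \sim \Bin(L,p)$ as a sum $X = \sum_{i=1}^L X_i$ of i.i.d.\ $\Ber(p)$ variables. First I would invoke, for any $s > 0$, the bound
\begin{align}
    \Pr\!\left[X \leq (1-\delta)Lp\right]
        = \Pr\!\left[e^{-sX} \geq e^{-s(1-\delta)Lp}\right]
        \leq e^{s(1-\delta)Lp}\,\E\!\left[e^{-sX}\right],
\end{align}
which is just Markov's inequality applied to the nonnegative random variable $e^{-sX}$. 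By independence, $\E[e^{-sX}] = \left(\E[e^{-sX_1}]\right)^L = \left(1 - p + pe^{-s}\right)^L$, so the right-hand side becomes $\left(e^{s(1-\delta)p}\left(1 - p + pe^{-s}\right)\right)^L$.

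Next I would minimize the base over $s > 0$. Taking the logarithm, $f(s) := s(1-\delta)p + \ln\!\left(1 - p + pe^{-s}\right)$, and setting $f'(s) = 0$ gives $(1-\delta)p = \frac{pe^{-s}}{1 - p + pe^{-s}}$, i.e.\ the optimal $s^\star$ satisfies $e^{-s^\star} = \frac{(1-\delta)(1-p)}{1 - (1-\delta)p}$; note $s^\star > 0$ precisely because $0 < \delta < 1$ forces this ratio into $(0,1)$. Substituting $s^\star$ back into $f$ and simplifying, the exponent collapses to
\begin{align}
    f(s^\star) = -\left[(1-\delta)p\,\ln\!\frac{(1-\delta)p}{p} + \bigl(1-(1-\delta)p\bigr)\ln\!\frac{1-(1-\delta)p}{1-p}\right]
             = -D_{\text{KL}}\!\left((1-\delta)p \,\|\, p\right),
\end{align}
using the definition of $D_{\text{KL}}$ in the lemma statement. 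Raising to the $L$-th power yields $\Pr[X \leq (1-\delta)Lp] \leq e^{-D_{\text{KL}}((1-\delta)p \| p)L}$, which is (\ref{eq:ChBoundBin}).

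The only real work is the algebraic simplification showing $f(s^\star)$ equals exactly $-D_{\text{KL}}((1-\delta)p\|p)$; this is routine but the place where a sign slip or a mishandled logarithm would derail things, so I would carry it out carefully (for instance by writing $q := (1-\delta)p$ throughout and checking the two logarithmic terms separately). Everything else — Markov's inequality, the product form of the MGF, the first-order condition, and the positivity of $s^\star$ — is immediate. Alternatively, one can cite this as the well-known Chernoff bound for sums of independent Bernoulli variables (e.g.\ via the Cramér–Chernoff theorem, where the Legendre transform of the log-MGF of a single $\Ber(p)$ variable evaluated at the point $q$ is exactly $D_{\text{KL}}(q\|p)$), in which case no computation is needed at all.
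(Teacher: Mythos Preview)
Your proposal is correct and follows essentially the same route as the paper: the paper likewise applies Markov's inequality to $e^{tX}$ for $t<0$ (your $t=-s$), uses the closed-form binomial MGF $(1-p+pe^t)^L$, solves the first-order condition to get $e^{t_{\min}}=\frac{(1-\delta)(1-p)}{1-(1-\delta)p}$, and then carries out the same algebraic simplification to identify the optimized exponent as $-D_{\text{KL}}((1-\delta)p\,\|\,p)$. The only cosmetic difference is the sign convention on the exponent variable.
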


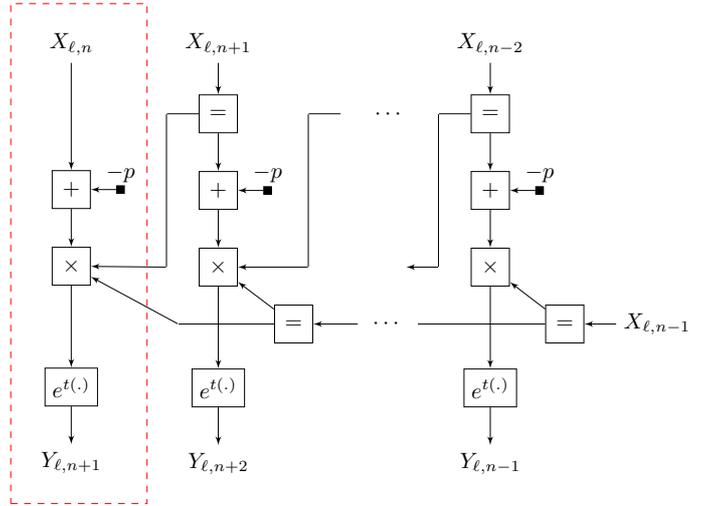
\begin{figure}
    \resizebox{0.51\textwidth}{!}{%
        \centering
\begin{tikzpicture}
[node distance=19mm,auto,>=latex',
  box/.style={draw, minimum size=0.6cm},
  short/.style={node distance=14mm},
  dash/.style={draw=red,dashed, minimum size=0.5cm},
  fixi/.style={draw,fill=black, inner sep=0.06cm},
  scale=1,transform shape]

\node (X_n) {$X_{\ell,n}$};
\node[box,below =1.7cm of X_n] (plus_n) {$+$} edge[<-] (X_n);
\node[fixi,right =0.4cm of plus_n] (shift_n) {} edge[->] (plus_n);
\node[above] at (shift_n) {$-p$};
\node[box,below =0.6cm of plus_n] (mul_n) {$\times$} edge[<-] (plus_n);
\node[box,below =1.3cm of mul_n] (exp_n) {$e^{t(.)}$} edge[<-] (mul_n);
\node[below =0.6cm of exp_n] (Y_{n+1}) {$Y_{\ell,n+1}$} edge[<-] (exp_n);
\node[right =1.2cm of X_n] (X_{n+1}) {$X_{\ell,n+1}$};
\node[box,below =0.5cm of X_{n+1}] (eq_{n+1}) {$=$} edge[<-] (X_{n+1});
\node[box,below =0.6cm of eq_{n+1}] (plus_{n+1}) {$+$} edge[<-] (eq_{n+1});
\node[fixi,right =0.4cm of plus_{n+1}] (shift_{n+1}) {} edge[->] (plus_{n+1});
\node[above] at (shift_{n+1}) {$-p$};
\node[box,below =0.6cm of plus_{n+1}] (mul_{n+1}) {$\times$} edge[<-] (plus_{n+1});
\node[box,below =1.3cm of mul_{n+1}] (exp_{n+1}) {$e^{t(.)}$} edge[<-] (mul_{n+1});
\node[below =0.6cm of exp_{n+1}] (Y_{n+2}) {$Y_{\ell,n+2}$} edge[<-] (exp_{n+1});
\node[inner sep=0pt,minimum size=0pt, left =0.5cm of eq_{n+1}] (con_1) {};
\draw (con_1) -- (eq_{n+1});
\node[inner sep=0pt,minimum size=0pt, left =0.5cm of mul_{n+1}] (con_2) {} edge[->] (mul_n);
\draw (con_1) -- (con_2);
\node[inner sep=0pt,minimum size=0pt, right =1.1cm of mul_{n+1}] (con_3) {} edge[->] (mul_{n+1});
\node[inner sep=0pt,minimum size=0pt, right =1.1cm of eq_{n+1}] (con_4) {};
\draw (con_3) -- (con_4);
\node[inner sep=0pt,minimum size=0pt, right =0.5cm of con_4] (con_5) {};
\draw (con_4) -- (con_5);
\node[right =0.4cm of con_5] (dots) {$\cdots$};
\node[inner sep=0pt,minimum size=0pt, right =0.4cm of dots] (con_6) {};
\node[box,right =0.5cm of con_6] (eq_{n-1}) {$=$};
\node[above =0.5cm of eq_{n-1}] (X_{n-2}) {$X_{\ell,n-2}$} edge[->] (eq_{n-1});
\node[box,below =0.6cm of eq_{n-1}] (plus_{n-1}) {$+$} edge[<-] (eq_{n-1});
\node[fixi,right =0.4cm of plus_{n-1}] (shift_{n-1}) {} edge[->] (plus_{n-1});
\node[above] at (shift_{n-1}) {$-p$};
\node[box,below =0.6cm of plus_{n-1}] (mul_{n-1}) {$\times$} edge[<-] (plus_{n-1});
\node[box,below =1.3cm of mul_{n-1}] (exp_{n-1}) {$e^{t(.)}$} edge[<-] (mul_{n-1});
\node[below =0.6cm of exp_{n-1}] (Y_{n-1}) {$Y_{\ell,n-1}$} edge[<-] (exp_{n-1});
\node[inner sep=0pt,minimum size=0pt, left =0.5cm of mul_{n-1}] (con_7) {};
\draw (eq_{n-1}) -- (con_6);
\draw (con_6) -- (con_7);
\node[inner sep=0pt,minimum size=0pt, left =0.5cm of con_7] (con_8) {} edge[<-] (con_7);
\node[right =0.4cm of X_{n-2}] (aux) {};
\node[box,below =4cm of aux] (eq_n) {$=$} edge[->] (mul_{n-1});
\node[right =0.5cm of eq_n] (X_{n-1}) {$X_{\ell,n-1}$} edge[->] (eq_n);
\node[inner sep=0pt,minimum size=0pt, left =2cm of eq_n] (con_9) {};
\draw (eq_n) -- (con_9);
\node[left =0.1cm of con_9] (dots_1) {$\cdots$};
\node[inner sep=0pt,minimum size=0pt, left =0.1cm of dots_1] (con_11) {};
\node[box,left =0.7cm of con_11] (eq_X) {$=$} edge[<-] (con_11);
\draw[->] (eq_X) -- (mul_{n+1});
\node[inner sep=0pt,minimum size=0pt, left =1.5cm of eq_X] (con_10) {} edge[-] (eq_X);
\draw[->] (con_10) -- (mul_n);
\node[box,dashed,red,inner sep=3.5mm,fit=(X_n)(shift_n)(Y_{n+1})] (closedBox) {};
\node[inner sep=0pt,minimum size=0pt, above =1.9cm of con_1] (label) {$\color{red} \E[e^{tx_{\ell,n+1}x_{\ell,n-1}(X_{\ell,n}-p)}] \leq e^{t^2/8}$};
\end{tikzpicture}
    }%
    \caption{Factor graph of (\ref{eq:Reordering_C}).}
    \label{fig:FG_product}
\end{figure}

This is undoubtedly well known, but for the convenience of the reader, we give a proof.

\begin{proof}
First we note that for every $t<0$
\begin{align}
    \Pr\!\left[X\leq (1-\delta)Lp\right] = \Pr\!\left[e^{tX} \geq e^{t(1-\delta)Lp}\right]\!, \label{eq:ChBoundBin_A}
\end{align}
and then applying Markov's inequality to the right-hand side of (\ref{eq:ChBoundBin_A}) yields
\begin{align}
    \Pr\!\left[e^{tX} \geq e^{t(1-\delta)Lp}\right] &\leq \min_{t<0} \frac{\E\!\left[e^{tX}\right]}{e^{t(1-\delta)Lp}} \\
                                                     &= \min_{t<0} \frac{\left(1-p+pe^t\right)^L}{e^{t(1-\delta)Lp}} \\
                                                     &= \min_{t<0} f(t)^L
\end{align}
with
\begin{align}
    f(t) &:= \frac{1-p+pe^t}{e^{t(1-\delta)p}} \\
         &= (1-p)e^{-t(1-\delta)p} + pe^{t\left(1-(1-\delta)p\right)}.
\end{align}
Setting the derivative of $f$
\begin{align}
    f'(t) &= -(1-\delta)p(1-p)e^{-t(1-\delta)p} \nonumber \\
          &\qquad + \left(1-(1-\delta)p\right)\!pe^{t\left(1-(1-\delta)p\right)}
\end{align}
equal to zero yields an unique $t_{\min}$ with
\begin{align}
    e^{t_{\min}} = \frac{(1-\delta)(1-p)}{1-(1-\delta)p} < 1 \quad \implies t_{\min}<0,
\end{align}
and thus
\begin{align}
    \min_{t<0} f(t) &= f(t_{\min}) \\
                    &= e^{-t_{\min}(1-\delta)p}\left( 1-p + pe^{t_{\min}} \right) \\
                    &= \left(\frac{(1-\delta)(1-p)}{1-(1-\delta)p}\right)^{-(1-\delta)p} \nonumber \\
                    &\qquad \cdot\left( 1-p + p\frac{(1-\delta)(1-p)}{1-(1-\delta)p} \right) \\
                    &= (1-\delta)^{-(1-\delta)p} \nonumber \\
                    &\qquad \cdot\left(\frac{1-p}{1-(1-\delta)p}\right)^{-(1-\delta)p} \frac{1-p}{1-(1-\delta)p} \\
                    &= (1-\delta)^{-(1-\delta)p} \nonumber \\
                    &\qquad \cdot\left(\frac{1-(1-\delta)p}{1-p}\right)^{-\left(1-(1-\delta)p\right)} \\     
                    &= e^{-(1-\delta)p \ln(1-\delta)} e^{-\left(1-(1-\delta)p\right) \ln\left(\frac{1-(1-\delta)p}{1-p}\right)} \\
                    &= e^{-D_{\text{KL}}\left((1-\delta)p\parallel p\right)}. \label{eq:ChBoundBin_B}
\end{align}
The right-hand side of (\ref{eq:ChBoundBin_B}) is indeed the global minimum of $f$,
because $f(t_{\min}) < f(0)$ and $f(t_{\min}) < \lim_{t\to-\infty} f(t)$.
\end{proof}

\end{document}